\setlist{nolistsep}
\newtheorem{thm}{Theorem}
\newtheorem{prop}[thm]{Proposition} 
\newtheorem{lemma}[thm]{Lemma}
\theoremstyle{definition}
\theoremstyle{remark}
\newtheorem{example}{Example}
\newcommand*{\set}[1]{\left\{#1\right\}}
\newcommand*{\act}{a} 
\newcommand*{\lact}{L} 
\newcommand*{\ract}{R} 
\newcommand*{\lstate}{L} 
\newcommand*{\rstate}{R} 
\newcommand*{\lsign}{\ell} 
\newcommand*{\rsign}{r} 
\newcommand*{\Lsign}{L} 
\newcommand*{\Rsign}{R}
\newcommand*{\losign}{\lambda} 
\newcommand*{\rosign}{\rho} 
\newcommand*{\cost}{k}
\title{Herding driven by the desire to differ}
\author{Sander Heinsalu\thanks{Research School of Economics, Australian National University, 
25a Kingsley St, Acton ACT 2601, Australia.
Email: sander.heinsalu@anu.edu.au, 
website: \url{https://sanderheinsalu.com/}
}}
\date{\today}
\begin{document}
\maketitle

\begin{abstract}
Observational learning often involves congestion: an agent gets lower payoff from an action when more predecessors have taken that action. This preference to act differently from previous agents may paradoxically increase all but one agent's probability of matching the actions of the predecessors. 
The reason is that when previous agents conform to their predecessors despite the preference to differ, their actions become more informative. 
The desire to match predecessors' actions may reduce herding by a similar reasoning. 

Keywords: herding, information cascade, social preferences, congestion.
	
JEL classification: D83, D82, C73, C72. 
\end{abstract}

This paper studies rational agents' learning from the choices of others when the information of others is not directly available. Payoffs are interdependent due to congestion costs: if more preceding agents choose an action, then an agent's payoff from taking that action falls. 
Congestion arises in many situations, such as when individuals choose a supermarket lane or other queue, a route to drive or a service provider to use. For firms, choosing a market that others have entered is less profitable, other things equal. 

The model follows the seminal papers of \cite{banerjee1992} and \cite{bikhchandani+1992} on herding and information cascades. Agents choose in sequence between two actions, after observing the previous agents' actions and a private signal. All agents prefer their action to match a binary state, which is symmetrically unknown. The payoff of an agent also increases when the action differs from those of the preceding agents. 
Such social preferences are also assumed in \cite{gaigl2009} and \cite{eyster+2014}. 

In equilibrium, the preference for an action different from that of previous agents may paradoxically increase all but one agent's probability of matching the actions of the predecessors, compared to the case when payoffs do not depend on others' actions. 
When previous agents choose the same action as their predecessors, congestion costs increase the informativeness of this action. The reason is that a stronger signal is required to induce an action when the preceding agents have chosen it. A more informative action in turn motivates imitation, even when congestion moderately increases the cost of imitating. 

Similarly, the desire to conform to previous agents' actions may reduce herding. If past agents made the same choice as their predecessors, then these actions are less informative under a preference to match previous moves. The decreased informativeness of preceding actions allows an agent's private signal to outweigh the combined effect of previous moves and the desire to conform. 

In contrast to the current work, both \cite{gaigl2009} and \cite{eyster+2014} show that congestion costs reduce herding and, if not too large, improve learning. Large enough congestion costs cause agents to alternate their actions (anti-herd), which decreases learning. \cite{gaigl2009} and \cite{eyster+2014} focus on asymptotic learning, but the present paper considers the probability of each agent matching the actions of his predecessors, as well as the correct action. 
As in the previous literature, when the desire to differ is small enough, all agents take the same action from some finite time onward. In that case, learning is bounded, i.e.\ there is positive probability of the wrong action as time goes to infinity. 

In \cite{callander+horner2009}, agents are differently informed and observe only the number of previous movers choosing an action, not who chose it. Following the minority is sometimes optimal. 

Other forms of social preference in herding have been studied. In \cite{ali+kartik2012}, agents prefer others to take the correct action. \cite{callander2007} assumes that agents want to match the eventual majority, thus payoff depends on future agents' choices, unlike in the current work. 

The next section sets up the model where agents desire to differ from previous movers. The results are collected in Section~\ref{sec:results} and discussed in Section~\ref{sec:discussion}. The appendix shows that a desire to conform may reduce herding.

\section{Model}
\label{sec:model}

Time is discrete, with periods and players indexed by $i\in\mathbb{N}$. In period $i$, player $i$ observes a private signal $s_{i}\in\set{\Lsign,\lsign,\rsign,\Rsign}$ and chooses a public action $\act_{i}\in\set{\lact,\ract}$. The public history of actions up to time $t$ is denoted $\act^{t}=(\act_{1},\ldots,\act_{t})$. 
Action $\act_{i}$ is called \emph{uninformative} after history $\act^{i-1}$ if $\act_{i}(\act^{i-1},s_{i})$ is constant in $s_{i}$, and \emph{informative} otherwise. An \emph{information cascade} is said to occur after history $\act^{t}$ if actions $a_{i}$, $i> t$ are uninformative after any continuation history $\act^{i}=(\act^{t},\act_{t+1},\ldots)$. 
\emph{Herding} after history $\act^{t}$ means that $\act_{t+1}=\act_{t}$ regardless of signals. Thus a herd is a special case of an information cascade. 

An unknown state $\theta\in\set{\lstate,\rstate}$ determines payoffs via 
\begin{align*} 
u_{i}(\act^{i},\theta) =\text{\textbf{1}}\set{\act_{i} =\theta} -\frac{\cost}{i-1}\sum_{j=1}^{i-1}\text{\textbf{1}}\set{\act_{j} =\act_{i}},
\end{align*} 
where \textbf{1}$S$ denotes the indicator function of set $S$ and $\cost\geq 0$ is the congestion cost. 
If $\cost=0$, then the environment is standard herding with independent preferences. If $\cost>0$, then each player prefers to take a different choice than the majority of the previous agents, other things equal. 

The prior probability of state $\rstate$ is $p_{0}\in[\frac{1}{2},1)$ w.l.o.g. 
Denote by $p_S\in(0,1)$ the unconditional probability of signal $s_{i}\in\set{\Lsign,\Rsign}$. Conditional on the state, the probabilities of the signals are $\Pr(\Lsign|\lstate) =\Pr(\Rsign|\rstate) =Q\in\left(\frac{p_S}{2},p_S\right)$ and $\Pr(\lsign|\lstate) =\Pr(\rsign|\rstate) =q\in\left(\frac{1-p_S}{2},\frac{Q(1-p_S)}{p_S}\right)$. 
Therefore $\Pr(\Lsign|\rstate) =\Pr(\Rsign|\lstate) =p_S-Q$ and $\Pr(\lsign|\rstate) =\Pr(\rsign|\lstate) =1-p_S-q$. 
Bayesian updating determines each player's posterior belief $p_i=\Pr(\rstate|\act^{i-1},s_{i})$ and log likelihood ratio $l_i:=\ln p_i-\ln(1-p_i)$. Using $l_i$ instead of $p_i$ simplifies the exposition and is mathematically equivalent. 
Signals $\Lsign,\lsign$ favour state $\lstate$, in the sense of increasing the posterior probability of $\lstate$. Similarly, $\Rsign,\rsign$ favour state $\rstate$. 
Calling signals $\lsign,\rsign$ \emph{weak} and $\Lsign,\Rsign$ \emph{strong} is justified by 
$\frac{q}{1-p_S}<\frac{Q}{p_S}$, which means that the posterior belief moves more in response to $\Lsign,\Rsign$ than to $\lsign,\rsign$. 
Assume $q>p_{0}(1-p_S)$, 
equivalently $l_q>l_0$, to ensure signals are informative enough for even a weak signal $s\in\set{\lsign,\rsign}$ to overturn the prior, i.e.\ player $1$ to believe after signal $\lsign$ that state $\lstate$ is more likely than $\rstate$. 

Denote the (public) log likelihood ratio of player $i>1$ before observing $s_{i}$ by $l_{i}(\act^{i-1})$.
Action $\tilde{a}_{i}\in\set{\lact,\ract}$ is called \emph{more informative} than $\hat{a}_{j}\in\set{\lact,\ract}$ if $|l_{i+1}((a^{i-1},\tilde{a}_{i}))-l_{i}(a^{i-1})| \geq |l_{j+1}((a^{j-1},\hat{a}_{j}))-l_{j}(a^{j-1})|$ for any $a^{i-1},a^{j-1}$, which means that $\tilde{a}_{i}$ moves the public log likelihood ratio $l_{i+1}$ more than $\hat{a}_{j}$ moves $l_{j+1}$. 

To derive player $i$'s private log likelihood ratios $l_i(\act^{i-1},s_i)$ after $a^{i-1},s_i$, define 
\begin{description}
\item $l_Q:=\ln Q-\ln (p_S-Q) >0$, 
\item $l_q:=\ln q-\ln (1-p_S-q) \in(0,l_Q)$, 
\item $l_{Qq}:= \ln(Q+q)-\ln(1-Q-q)\in(l_q,l_Q)$ and 
\item $l_{\neg Q}:= \ln(1-p_S+Q)-\ln(1-Q)\in(0,l_{Qq})$, 
\end{description}
where $l_Q,l_q$ are the log likelihood ratios of strong and weak signals respectively. The log likelihood ratio $l_{Qq}$ does not distinguish strong and weak signals, only whether the signal favours $\lstate$ or $\rstate$. If the strong signal in favour of one state is distinguishable from the other three, but the latter look identical to an agent, then upon not seeing the distinguishable strong signal, the agent uses $l_{\neg Q}$ to update. 
The (private) log likelihood ratios of $i$ upon observing $s_{i}$ are 
\begin{align*}
&l_{i}(\act^{i-1},\Lsign) =l_{i}(\act^{i-1})-l_Q, 
\qquad l_{i}(\act^{i-1},\lsign) =l_{i}(\act^{i-1})-l_q, 
\\&\notag l_{i}(\act^{i-1},\Rsign) =l_{i}(\act^{i-1})+l_Q, 
\qquad l_{i}(\act^{i-1},\rsign) =l_{i}(\act^{i-1})+l_q. 
\end{align*}
Note that $l_{i}(\act^{i-1},\Rsign)=2l_{i}(\act^{i-1}) -l_{i}(\act^{i-1},\Lsign)$ and $l_{i}(\act^{i-1},\rsign)=2l_{i}(\act^{i-1}) -l_{i}(\act^{i-1},\lsign)$. 

The expected utility of player $i$ with log likelihood ratio $l$ from action $\act_i=\ract$ if fraction $f$ of previous players chose $\ract$ is $\frac{\exp(l)}{1+\exp(l)}-f\cost$, but the expected utility from $\act_i=\lact$ is $\frac{1}{1+\exp(l)}-(1-f)\cost$. The payoff difference 
$
\Delta(l,f):= \frac{\exp(l)-1}{1+\exp(l)}+(1-2f)\cost
$ 
determines the best response: player $i$ chooses $\ract$ if $\Delta(l,f)>0$ and only if $\Delta(l,f)\geq 0$. 
Define the cutoff log likelihood ratio 
\begin{align}
\label{cutoffllr}
l_{\cost}(f):=\ln(1-\cost+2f\cost) -\ln(1+\cost-2f\cost)
\end{align} 
at which a player switches from action $\lact$ to $\ract$. Clearly $l_{\cost}(\frac{1}{2})=0$ and $l_{\cost}(1)=-l_{\cost}(0)$.\footnote{More generally, $l_{\cost}$ is antisymmetric around $\frac{1}{2}$, i.e.\ $l_{\cost}(f)=-l_{\cost}(\frac{1}{2}-f)$ for any $f\geq\frac{1}{2}$.
} 

The next section derives the optimal action choices of the players and provides sufficient conditions for herding to increase when players want to take a different action from their predecessors.

\section{Beliefs and best responses}
\label{sec:results}

Player $1$ chooses $\act_1=\lact$ after signals $\Lsign,\lsign$ and $\act_1=\ract$ after $\Rsign,\rsign$, due to the assumption $l_q>l_{0}$. There are no predecessors for player $1$, so the optimal action $\act_{1}^{*}$ does not depend on $\cost$. Similarly, if exactly half the predecessors of an odd-numbered player $2i-1$ choose action $\lact$, then $\cost$ does not affect $a_{2i-1}$. 

Given $l_q>l_{0}$, player $2$'s log likelihood ratios conditional on player $1$'s action $\act_{1}$ are $l_2(\lact) =l_0-l_{Qq}$ and $l_2(\ract) =l_0+l_{Qq}$ before observing $s_2$. The interpretation of $\act_{1}=\lact$ from player $2$'s perspective is as the `average' of the signals $\Lsign$ and $\lsign$, and similarly $\act_{1}=\ract$. 
If the congestion cost is not too large and the prior not too extreme, then the action of player $2$ responds to $s_2$. 
%
%
Lemma~\ref{lem:a1a2} characterises when the actions of the first two agents are informative. 
\begin{lemma}
\label{lem:a1a2}
Player $1$'s action is informative if $l_Q>l_{0}$ and only if $l_Q\geq l_{0}$. 
Player $2$'s action is informative after any $\act_1$ if $l_q>l_{0}$, $l_0-l_{Qq}-l_Q<l_{\cost}(0)$ and $l_0+l_{Qq}-l_Q<l_{\cost}(1)$. 
\end{lemma}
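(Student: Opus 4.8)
The plan is to read off, signal by signal, the best responses of players $1$ and $2$ from the cutoff rule: after a history in which a fraction $f$ of the predecessors chose $\ract$, a player with private log likelihood ratio $l$ chooses $\ract$ iff $l\geq l_{\cost}(f)$. I use throughout the orderings $0<l_q<l_{Qq}<l_Q$ from the definitions and $l_0\geq 0$ from $p_0\geq\frac12$. For player $1$ there are no predecessors (no congestion term), so the cutoff is $0$. Signals $\Rsign$ and $\rsign$ give the ratios $l_0+l_Q>0$ and $l_0+l_q>0$, so player $1$ chooses $\ract$ after either; signal $\Lsign$ gives $l_0-l_Q$, so player $1$ chooses $\lact$ exactly when $l_Q>l_0$. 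Hence if $l_Q>l_0$, player $1$'s action is $\lact$ after $\Lsign$ and $\ract$ after $\Rsign$, so it is non-constant, i.e.\ informative. Conversely, if $l_Q\leq l_0$, then $l_q<l_Q\leq l_0$, so the ratios $l_0-l_q$ and $l_0-l_Q$ after $\lsign$ and $\Lsign$ are both nonnegative and player $1$ chooses $\ract$ after every signal; so informativeness forces $l_Q>l_0$, a fortiori $l_Q\geq l_0$.

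For player $2$, assuming $l_q>l_0$, the previous paragraph gives $\act_1=\lact$ iff $s_1\in\set{\Lsign,\lsign}$, hence $l_2(\lact)=l_0-l_{Qq}$ and $l_2(\ract)=l_0+l_{Qq}$ as in the text, and player $2$ faces only $f=0$ (after $\act_1=\lact$) or $f=1$ (after $\act_1=\ract$). Since the hypothesis $l_0-l_{Qq}-l_Q<l_{\cost}(0)$ presupposes $l_{\cost}(0)=\ln(1-\cost)-\ln(1+\cost)$ is defined, we have $\cost<1$ and $l_{\cost}(0)\leq 0\leq -l_{\cost}(0)=l_{\cost}(1)$. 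After $\act_1=\lact$: signal $\Lsign$ gives $l_0-l_{Qq}-l_Q<l_{\cost}(0)$ by hypothesis, so player $2$ chooses $\lact$, while signal $\Rsign$ gives $l_0-l_{Qq}+l_Q>l_0\geq l_{\cost}(0)$ (using $l_Q>l_{Qq}$), so player $2$ chooses $\ract$; hence $\act_2$ is informative after $\act_1=\lact$. After $\act_1=\ract$: signal $\Lsign$ gives $l_0+l_{Qq}-l_Q<l_{\cost}(1)$, which is the third hypothesis, so player $2$ chooses $\lact$, and it remains to exhibit a signal inducing $\ract$.

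The step I expect to be the main obstacle is this last one: after $\act_1=\ract$ one must rule out that the congestion cost is so large that player $2$ chooses $\lact$ even after the strongest pro-$\ract$ signal $\Rsign$. The key observation is that the needed bound on $\cost$ is already contained in the second hypothesis: rearranging $l_0-l_{Qq}-l_Q<l_{\cost}(0)=-l_{\cost}(1)$ yields $l_{\cost}(1)<l_{Qq}+l_Q-l_0\leq l_0+l_{Qq}+l_Q$, the last step using $l_0\geq 0$, and the right-hand side is exactly player $2$'s private log likelihood ratio after $\act_1=\ract$ and signal $\Rsign$. So player $2$ chooses $\ract$ after $\Rsign$, and $\act_2$ is non-constant after $\act_1=\ract$ as well. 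Combining the two cases gives informativeness after any $\act_1$; all remaining steps are routine sign comparisons using $0<l_q<l_{Qq}<l_Q$ and $l_0\geq 0$.
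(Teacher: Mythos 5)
Your proof is correct and follows essentially the same route as the paper's: player $1$'s case is handled by direct sign comparison, and for player $2$ the key step of deducing $\act_2(\ract,\Rsign)=\ract$ from the hypothesis $l_0-l_{Qq}-l_Q<l_{\cost}(0)$ via $l_{\cost}(0)=-l_{\cost}(1)$ and $l_0\geq 0$ is exactly the paper's argument. (One minor quibble: at the boundary $l_Q=l_0$ player $1$ is indifferent after $\Lsign$, so ``chooses $\ract$ after every signal'' overstates the converse slightly, but the lemma's ``only if $l_Q\geq l_0$'' requires only the strict case $l_Q<l_0$, which you handle correctly.)
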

\begin{proof}
If $l_Q>l_{0}$, then $l_1(\Lsign)<0$, so $\act_1(\Lsign)=\lact$. Due to $l_0>0$, $\act_1(\rsign)=\act_1(\Rsign)=\ract$, thus $\act_1$ is informative. If $l_Q<l_{0}$, then $l_1(\Lsign)>0$, so $\act_1=\ract$ for any $s_1$. 

Clearly $\act_2(\lact,\Rsign)=\ract$ for any $\cost\geq0$. Before observing $s_2$, if $l_q>l_{0}$, then $l_2(\lact) =l_0-l_{Qq}$ and $l_2(\ract) =l_0+l_{Qq}$. Then $l_0-l_{Qq}-l_Q<l_{\cost}(0)$ implies $\Delta(l_2(\lact,\Lsign),f)<0$, so $\act_2(\lact,\Lsign)=\lact$, ensuring that $\act_2$ is informative after $\act_1=\lact$. 

The condition $l_0+l_{Qq}+l_Q>l_{\cost}(1)$ ensuring $\act_2(\ract,\Rsign)=\ract$ is implied by $l_0-l_{Qq}-l_Q<l_{\cost}(0)$ and $l_{\cost}(0)=-l_{\cost}(1)$. If $l_0+l_{Qq}-l_Q<l_{\cost}(1)$, then $\act_2(\ract,\Lsign)=\lact$, thus $\act_2$ is informative after $\act_1=\ract$. 
\end{proof}
The maintained assumption $l_q>l_{0}$ implies $l_Q> l_{0}$, which ensures $\act_1$ is informative by Lemma~\ref{lem:a1a2}. The conditions sufficient for $\act_2$ to be informative are not necessary. The interpretation of $l_0-l_{Qq}-l_Q+l_{\cost}(1)<0$ is that the congestion cost is small enough for player $2$ not to ignore own signal just to take a different action from player $1$. If $l_0+l_{Qq}-l_Q-l_{\cost}(1)<0$, then the prior probability of state $\rstate$ is low enough that a strong signal $s_2=\Lsign$ in favour of $\lstate$ together with the preference to differ outweighs the prior and player $1$'s action $\act_1=\ract$. 

Next, sufficient conditions are provided for herding to increase after the introduction of the desire to differ from previous agents. Increased herding means that actions become uninformative after some histories, but not the reverse. 
The set of histories after which herding occurs under $\cost>0$, but not under $\cost=0$ can have probability close to $1$, as the numerical example after Proposition~\ref{prop:herdincreases} demonstrates. 
Proposition~\ref{prop:herdincreases} proves increased herding for the first four players under $\cost>0$ compared to $\cost=0$. After that, Lemma~\ref{lem:a5} shows that player $5$ also herds more under $\cost>0$. 
\begin{prop}
\label{prop:herdincreases}
Assume $l_q>l_{0}$ and $l_0-l_{Qq}-l_Q+l_{\cost}(1)<0$. 
\\(a) If $\cost=0$, $l_0-l_{Qq}+l_q<0$ and $l_0+l_{Qq}+l_{\neg Q}-l_Q<0$, then $\act_3$ is informative after any $\act^{2}$. 
\\(b) If $\cost>0$, $l_0+l_{Qq}-l_q-l_{\cost}(1)<0$ and $l_0-2l_{Qq}+l_Q+l_{\cost}(1)<0$, then $\act_3$ is uninformative after $\act_1=\act_2$, the probability of which is $(Q+q)^2+(1-Q-q)^2> \frac{1}{2}$. If in addition $l_0+l_{Qq}-l_q-l_{\cost}\left(\frac{i+1}{2i+1}\right)<0$, then $\act_{2i+3}$ is uninformative after $\act_{2i+1}=\act_{2i+2}$. 
\\(c) If $l_0-l_{Qq}+l_{\neg Q}<0$ and $\act_4(\act^3,s_3)$ is informative under $\cost>0$, then also under $\cost=0$. 
\end{prop}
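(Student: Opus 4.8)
The plan is to compute, history by history, each early player's optimal partition of the signal set $\{\Lsign,\lsign,\rsign,\Rsign\}$, read off the increment the resulting action adds to the public log likelihood ratio, and then test whether the strongest contrary signal can move the next player across his cutoff $l_\cost(f)$. Everything rests on the orderings $0<l_{\neg Q}<l_{Qq}<l_Q$, $0\le l_0<l_q$, $l_\cost(0)=-l_\cost(1)\le 0$, monotonicity of $f\mapsto l_\cost(f)$, antisymmetry $l_\cost(1-f)=-l_\cost(f)$, and the rule $\act_i=\ract\iff l>l_\cost(f)$: an informative $\lact$ (resp.\ $\ract$) that responds by whether the signal favours $\lstate$ or $\rstate$ adds exactly $-l_{Qq}$ (resp.\ $+l_{Qq}$), one responding only to the strong contrary signal adds $\mp l_{\neg Q}$, and one responding only to the strong own signal adds $\mp l_Q$.

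\emph{Part (a).} With $\cost=0$ every cutoff is $0$. Player $1$ plays $\lact$ on $\{\Lsign,\lsign\}$, so $\act_1$ adds $\mp l_{Qq}$. From $l_0-l_{Qq}+l_q<0$ (and the maintained $l_0-l_{Qq}-l_Q<0$), after $\act_1=\lact$ player $2$ plays $\ract$ only on $\Rsign$; from $l_0+l_{Qq}+l_{\neg Q}-l_Q<0$, after $\act_1=\ract$ player $2$ plays $\lact$ only on $\Lsign$. Hence player $3$'s public LLR lies in $\{\,l_0-l_{Qq}-l_{\neg Q},\,l_0-l_{Qq}+l_Q,\,l_0+l_{Qq}-l_Q,\,l_0+l_{Qq}+l_{\neg Q}\,\}$, and in each of the four cases I would check in one line, from the two displayed inequalities and the orderings, that adding $-l_Q$ keeps the LLR negative while adding $+l_Q$ makes it positive — e.g.\ for $l_0-l_{Qq}-l_{\neg Q}$ the first step uses $l_0-l_{Qq}<-l_q$ and the second uses $l_Q>l_0+l_{Qq}+l_{\neg Q}$ — so $\act_3$ is informative.

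\emph{Part (b).} The crux is that under $\cost>0$ and the two displayed inequalities, player $2$'s rule does not depend on $\act_1$: play $\ract$ iff $s_2\in\{\rsign,\Rsign\}$. The flips $\Lsign\to\lact$, $\Rsign\to\ract$ come from the maintained assumption and $l_Q>l_{Qq}$; $\lsign\to\lact$ follows from $l_0-2l_{Qq}+l_Q+l_\cost(1)<0$ and $l_Q+l_q>l_{Qq}$; and $\rsign\to\ract$ follows from $l_0+l_{Qq}-l_q-l_\cost(1)<0$ together with $l_0\ge0$ — this last use of a non‑negative prior is the one delicate point, as it is exactly what rules out the coarser rule responding only to $\Rsign$. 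So $\act_2$ adds $\mp l_{Qq}$ whatever $\act_1$ is, player $3$'s LLR after $\act_1=\act_2$ is $l_0-2l_{Qq}$ or $l_0+2l_{Qq}$ with fraction $f=0$ or $1$, and the single inequality $l_0-2l_{Qq}+l_Q+l_\cost(1)<0$ (with $l_0\ge0$ on the $f=1$ side) shows even $\Rsign$ (resp.\ $\Lsign$) leaves him put, so $\act_3$ is uninformative. For the probability I would condition on the state: since $\act_1=\lact\iff s_1$ favours $\lstate$ and $\act_2=\lact\iff s_2$ favours $\lstate$, and signals are independent across players, the event $\act_1=\act_2$ has probability $(Q+q)^2+(1-Q-q)^2$ in each state, exceeding $\tfrac12$ because $l_{Qq}>0$ gives $Q+q>\tfrac12$. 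For $\act_{2i+3}$ I would repeat the argument along the history in which the first $2i$ actions alternate, so player $2i+1$ faces prior LLR $l_0$ and cutoff $0$ and — because $l_q>l_0$ — again responds by $\lstate$‑vs‑$\rstate$, adding $\mp l_{Qq}$; player $2i+2$ then faces prior $l_0+l_{Qq}$ with fraction $\tfrac{i+1}{2i+1}$ (or $l_0-l_{Qq}$ with fraction $\tfrac{i}{2i+1}$, whose cutoff is $-l_\cost(\tfrac{i+1}{2i+1})$ by antisymmetry), so $l_0+l_{Qq}-l_q-l_\cost(\tfrac{i+1}{2i+1})<0$ is precisely what forces $\act_{2i+2}$ again to add $\mp l_{Qq}$; player $2i+3$ then inherits $l_0\mp2l_{Qq}$ exactly as player $3$ did, and the base‑case inequality closes it.

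\emph{Part (c).} By part (b) the only $\act^3$ after which $\act_4$ can be informative under $\cost>0$ are those with $\act_1\neq\act_2$: along these, player $2$ adds $\pm l_{Qq}$, restoring the LLR to $l_0$, player $3$ (prior $l_0$, fraction $\tfrac12$) responds by $\lstate$‑vs‑$\rstate$, and $\act_4$ faces $l_0\pm l_{Qq}$ with fraction $\tfrac13$ or $\tfrac23$, so I can read off when $\act_4$ is informative under $\cost>0$. I would then redo the same two players under $\cost=0$, where player $2$ instead responds only to the strong contrary signal, so the intermediate LLR becomes $l_0-l_{Qq}+l_Q$ or $l_0+l_{Qq}-l_Q$ and player $3$'s rule must be re‑derived (the inequalities of (b) force $\lsign\to\lact$ here via $l_Q-l_{Qq}<l_q$), and then verify case by case that $\act_4$ stays informative, the hypothesis $l_0-l_{Qq}+l_{\neg Q}<0$ being what controls the histories in which $\act_4$'s LLR would otherwise be pushed past $l_Q$ after the strong contrary signal. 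I expect this last step to be the main obstacle: the $\cost=0$ and $\cost>0$ environments reach genuinely different histories with different signal‑partitions, so the $\cost>0$ analysis cannot be transferred directly and players $3$ and $4$ must be re‑derived from scratch at $\cost=0$; since each intermediate action's increment is one of $l_Q,l_{Qq},l_{\neg Q}$ according to which signals trigger it, the whole argument is only as reliable as that bookkeeping, and the borderline histories (where an increment sits exactly at a threshold) are where the extra hypothesis has to be deployed carefully.
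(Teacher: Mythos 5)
Your parts (a) and (b) track the paper's proof step for step: the same signal partitions, the same increments $l_{Qq}$, $l_{\neg Q}$, $l_Q$, and the same cutoff checks, and the probability computation and the induction to player $2i+3$ also match. One point in your favour: in (b) you obtain $\act_2(\lact,\rsign)=\ract$ directly from $l_0+l_{Qq}-l_q-l_{\cost}(1)<0$ together with $l_0\geq0$ and $l_{\cost}(0)=-l_{\cost}(1)$, which is the correct derivation; the paper instead invokes an auxiliary inequality $l_0-l_{Qq}+l_q+l_{\cost}(1)<0$, whose sign as printed would give $\act_2(\lact,\rsign)=\lact$ and break the argument, so your version (the inequality reversed and shown to be implied) is the one that actually works.

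Part (c) is where the gap is, and it is not merely that the case check is left undone. You correctly note that the hypotheses of (b) force $l_Q-l_{Qq}<l_q-2l_0$ (add the two displayed inequalities of (b)), hence at $\cost=0$ player $3$ after $(\lact,\ract)$ plays $\lact$ on $\lsign$ as well as on $\Lsign$. But then $\act_3=\ract$ after $(\lact,\ract)$ reveals $s_3\in\set{\rsign,\Rsign}$ and contributes $+l_{Qq}$, so $l_4(\lact,\ract,\ract)=l_0-l_{Qq}+l_Q+l_{Qq}=l_0+l_Q$, and a strong contrary signal only brings player $4$ back to $l_0\geq0$. The hypothesis $l_0-l_{Qq}+l_{\neg Q}<0$ that you say controls this history never enters: it is relevant only under the opposite partition ($\lsign\to\ract$, so that $\act_3=\ract$ contributes $l_{\neg Q}$), which is the one the paper's Table~\ref{tab:l4a4} tacitly assumes but which is inconsistent with the partition you derived. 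So your sketch cannot simultaneously have $\lsign\to\lact$ for player $3$ and rescue $(\lact,\ract,\ract)$ via $l_0-l_{Qq}+l_{\neg Q}<0$. Carried out faithfully with your partition, $\act_4(\lact,\ract,\ract,\cdot)$ is uninformative at $\cost=0$ whenever $l_0>0$ while remaining informative at $\cost>0$, contradicting the conclusion of (c) for that history except at the knife edge $l_0=0$ with the tie broken toward the signal. You must either restrict to $l_0=0$, or decouple (c) from the hypotheses of (b) and re-derive player $3$'s $\cost=0$ partition under whatever assumptions remain; as written, this step does not close.
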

\begin{proof}
(a) The condition $l_0+l_{Qq}-l_Q-l_{\cost}(1)<0$ is implied by $l_0+l_{Qq}+l_{\neg Q}-l_Q<0$ and by $l_0+l_{Qq}-l_q-l_{\cost}(1)<0$. Recall $l_{\cost}(f)=-l_{\cost}(\frac{1}{2}-f)$. 

If $l_q>l_{0}$, then $\act_1(\Lsign)=\act_1(\lsign)=\lact$ and $\act_1(\Rsign)=\act_1(\rsign)=\ract$. In this case, $l_0-l_{Qq}-l_Q+l_{\cost}(1)<0$ ensures $\act_2(\lact,\Lsign)=\lact$. If $\cost=0$, $l_0-l_{Qq}+l_q<0$ and $l_0+l_{Qq}-l_Q-l_{\cost}(1)<0$, then player $2$'s actions are $\act_2(\ract,s_2\neq \Lsign) =\ract =\act_2(\lact,\Rsign)$ and $\act_2(\ract,\Lsign) =\lact =\act_2(\lact,s_2\neq \Rsign)$, so 
\begin{align*}
&l_3(\lact,\lact)=l_0-l_{Qq}-l_{\neg Q},\qquad l_3(\lact,\ract)=l_0-l_{Qq}+l_{Q}, 
\\&\notag l_3(\ract,\lact)=l_0+l_{Qq}-l_{Q},\qquad l_3(\ract,\ract)=l_0+l_{Qq}+l_{\neg Q}. 
\end{align*}

When $\cost=0$ and $l_0+l_{Qq}+l_{\neg Q}-l_Q<0$, player $3$'s action is informative after $\act_1=\act_2$: 
\begin{center}
\begin{tabular}{lrr}
\hline
private history $(\act^2,s_3)$ & $l_3(\act^2,s_3)$ & $\act_3(\act^2,s_3)$ \\
\hline
$\ract,\ract,\Lsign$ & $l_0+l_{Qq}+l_{\neg Q}-l_Q<0$ & $\lact$ \\
$\ract,\ract,s_3\neq \Lsign$ & $\geq l_0+l_{Qq}+l_{\neg Q}-l_{q}>0$ & $\ract$ \\
$\lact,\lact,\Rsign$ & $l_0-l_{Qq}-l_{\neg Q}+l_Q>0$ & $\ract$ \\
$\lact,\lact,s_3\in\set{\lsign,\Lsign}$ & $\leq l_0-l_{Qq}-l_{\neg Q}-l_{q}<0$ & $\lact$ \\
\hline
\end{tabular}
\end{center}

If $\cost=0$, then $\act_3$ is informative after $\act_1\neq \act_2=\ract$, because $l_3(\lact,\ract,\Lsign) =l_0-l_{Qq} <0$ due to $l_{Qq}>l_q>l_0$, and $l_3(\lact,\ract,\rsign) =l_0-l_{Qq}+l_Q+l_q >0$. Action $\act_3$ is informative after $\act_1\neq \act_2=\lact$, because $l_3(\ract,\lact,\Rsign) =l_0+l_{Qq} >0$ and $l_3(\ract,\lact,\Lsign) =l_0+l_{Qq}-2l_Q<0$ due to $l_Q>l_{Qq}>l_0$. 

(b) 
If $\cost>0$, $l_q>l_{0}$, $l_0+l_{Qq}-l_q-l_{\cost}(1)<0$ and $l_0-l_{Qq}+l_q+l_{\cost}(1)<0$, then $\act_2(\act_1,\Lsign) =\act_2(\act_1,\lsign) =\lact$ and $\act_2(\act_1,\Rsign) =\act_2(\act_1,\rsign) =\ract$ for any $\act_1$, so $l_3(\lact,\lact)=l_0-2l_{Qq}$, $l_3(\lact,\ract)=l_3(\ract,\lact)=l_0$ and $l_3(\ract,\ract)=l_0+2l_{Qq}$. 

If $\cost>0$ and $l_3(\lact,\lact,\Rsign) =l_0-2l_{Qq}+l_Q+l_{\cost}(1)<0$, then $l_3(\lact,\lact,s_3)<0$ for any $s_3$, so $\act_3(\lact,\lact,s_3) =\lact$ for any $s_3$. The condition $l_0-2l_{Qq}+l_Q+l_{\cost}(1)<0$ implies $l_0+2l_{Qq}-l_Q-l_{\cost}(1)>0$, so $l_3(\ract,\ract,s_3)>0$ and $\act_3(\ract,\ract,s_3) =\ract$ for any $s_3$. 
If player $3$ herds after $\act_2=\act_1$, then so do all subsequent players, because $f=1$ remains unchanged. More generally, if player $i$ herds after some history $\act^{i-1}$ and $|f-\frac{1}{2}|$ weakly decreases over time, then all subsequent players $j>i$ also herd after any continuation of $\act^{i-1}$. 

In the $\cost>0$ case, if $\act_1\neq \act_2$, $l_0+l_{Qq}-l_q-l_{\cost}(1)<0$ and $l_0-l_{Qq}+l_q+l_{\cost}(1)<0$, then $l_3(\act^{2})=l_0$ and $f=\frac{1}{2}$, so $\act_3$ is informative by $l_q>l_0$. For any period $j$, if $f=\frac{1}{2}$, then $j$ is odd, $l_{j}(\act^{j-1})=l_0$ and if $\act_{j+1}=\act_{j}$, then in period $j+2$, $f=\frac{(j-1)/2+2}{j+1}$. If $l_0-l_{Qq}+l_q+l_{\cost}(1)<0$, then $l_0-l_{Qq}+l_q+l_{\cost}(f)<0$ for any $f\leq 1$. 
Whenever $f=\frac{1}{2}$ and $l_{j}(\act^{j-1})=l_0$, the game essentially restarts, with player $j$ in the role of player $1$ and a reduced $l_{\cost}(f)$, because $f$ responds less to $\act_{j+1}=\act_{j}$. Therefore if a herd has not started after $\act^{2i}$ (which implies $\act_{2t}\neq \act_{2t-1}$ for all $t\leq i$), 
then it starts after $(\act^{2i},\lact,\lact)$, and if $l_0+l_{Qq}-l_q-l_{\cost}\left(\frac{i+1}{2i+1}\right)<0$, then also after $(\act^{2i},\ract,\ract)$. The conditional probability of a herd is $\Pr(\act_{2i+2}=\act_{2i+1}|\act_{2i+1}\neq \act_{2i}) =1-2(Q+q)(1-Q-q)=(Q+q)^2+(1-Q-q)^2$. 

(c) 
Table~\ref{tab:l4a4} displays $l_3(\act^3)$ in the cases $\cost=0$ and $\cost>0$, as well as the conditions for $\act_4(\act^3,s_3)$ to be informative. 
\begin{table}[tb!]
\begin{center}
\caption{Public log likelihood ratios of player $4$ before seeing $s_4$, and conditions under which $\act_4$ responds to $s_4$. 
Maintained assumptions: $l_0-l_{Qq}+l_q<0$, $l_0+l_{Qq}+l_{\neg Q}-l_Q<0$, $l_0+l_{Qq}-l_q-l_{\cost}(1)<0$ and $l_0-2l_{Qq}+l_Q+l_{\cost}(1)<0$.}
\label{tab:l4a4}
\begin{tabular}{ @{\extracolsep{6pt}} lllll @{} }
\hline 
 & \multicolumn{2}{c}{$l_4(\act^3)$} & \multicolumn{2}{c}{$\act_4(\act^3,s_4)$ responds to $s_4$ if} \\
 \cline{2-3} \cline{4-5}
history $\act^3$ & $\cost=0$ & $\cost>0$ & $\cost=0$ & $\cost>0$ \\
\hline
$\lact,\lact,\lact$ & $l_0-l_{Qq}-2l_{\neg Q}$ & $l_0-2l_{Qq}$ & $l_0-l_{Qq}-2l_{\neg Q}+l_Q>0$ & never  \\ 
$\ract,\ract,\ract$ & $l_0+l_{Qq}+2l_{\neg Q}$ & $l_0+2l_{Qq}$ & $l_0+l_{Qq}+2l_{\neg Q}-l_Q<0$ & never  \\ 
$\lact,\lact,\ract$ & $l_0-l_{Qq}-l_{\neg Q}+l_Q$ & off-path & always  & never  \\ 
$\ract,\ract,\lact$ & $l_0+l_{Qq}+l_{\neg Q}-l_Q$ & off-path  & always & never  \\ 
$\lact,\ract,\lact$ & $l_0-l_{Qq}+l_Q-l_Q$ &  $l_0-l_{Qq}$ & always &  always \\ 
$\ract,\lact,\ract$ & $l_0+l_{Qq}-l_Q+l_Q$ &  $l_0+l_{Qq}$  & always &  always \\ 
$\lact,\ract,\ract$  & $l_0-l_{Qq}+l_Q+l_{\neg Q}$ & $l_0+l_{Qq}$ & $l_0-l_{Qq}+l_{\neg Q}<0$ & always \\ 
$\ract,\lact,\lact$  & $l_0+l_{Qq}-l_Q-l_{\neg Q}$ & $l_0-l_{Qq}$ & always & always \\ 
\hline 
\end{tabular} 
\end{center}
\end{table}
Sufficient for $\act_4(\act^3,s_3)$ to be informative under $\cost=0$ is that $l_3(\act^3,\Lsign)<0<l_3(\act^3,\Rsign)$, which is how the fourth column of Table~\ref{tab:l4a4} is derived from the second. 
Under $\cost>0$, if $\act_1=\act_2$, then herding already started from $\act_3$, so $\act_4$ is uninformative. If $\act_1\neq\act_2$, then player $4$ faces the same decision problem as player $2$, so by Lemma~\ref{lem:a1a2}, $\act_4$ is informative for any $\act_3$. More generally, if $\act_{2t-1}\neq\act_{2t}$ for all $t<i$, then player $2i$ faces the same decision problem as player $2$, so by Lemma~\ref{lem:a1a2}, $\act_{2i}$ is informative for any $\act_{2i-1}$. Table~\ref{tab:l4a4} shows that if $\act_{4}$ is informative under $\cost>0$ and $l_0-l_{Qq}+l_{\neg Q}<0$, then $\act_{4}$ is informative under $\cost=0$. 
\end{proof}

Proposition~\ref{prop:herdincreases} is not vacuous---two numerical examples satisfying the assumptions are presented next. 
\begin{example}
\label{ex:1}
Take either $p_0=\frac{1}{2}$, $p_S=\frac{61}{64}$, $Q=\frac{15611}{16384}$, $q= \frac{9}{256}$ and any $\cost\in[0,\frac{1}{3}]$, or $p_0=\frac{5}{8}$, $p_S=\frac{61}{64}$, $Q=\frac{3903}{4096}$, $q=\frac{9}{256}$ and $\cost\approx 0.01$. 
In both cases, player $3$'s herding probability increases from $0$ to $(Q+q)^2+(1-Q-q)^2\approx 0.98$. Herding by player $4$ (and $5$ and $6$, as Lemmas~\ref{lem:a5},~\ref{lem:a6} below show) increases after every history. 
\end{example}

The intuition for the condition $l_q>l_0$ in Proposition~\ref{prop:herdincreases} is that player $1$'s weak signal outweighs the prior, so player $1$ always follows own signal. The assumption $l_0-l_{Qq}+l_q<0$ ensures that with $\cost=0$, the prior $p_{0}$ is close enough to $\frac{1}{2}$ for player $2$'s weak signal in favour of state $\rstate$ not to outweigh the ``average'' signal (which is player $1$'s action) favouring state $\lstate$. The best response of player $2$ is then to follow player $1$ except when $s_2$ is strong and disagrees with $\act_1$. 

From player $3$'s perspective, observing $\act_2\neq \act_1$ is equivalent to seeing a strong signal $s_2=\act_2$, but observing $\act_2=\act_1$ conflates the three other signals $\lsign,\rsign$ and $s_3\in\set{\Lsign,\Rsign}\setminus \set{\act_2}$, in which case $3$'s log likelihood ratio moves by only $l_{\neg Q}$. 
The intuition for $l_0+l_{Qq}+l_{\neg Q}-l_Q<0$ is that the effect $l_{Q}$ of a strong signal outweighs the combined prior $l_0$, average signal $l_{Qq}$ and the conflation $l_{\neg Q}$ of three signals when $\cost=0$. Thus player $3$ always follows a strong signal $s_3\in\set{\Lsign,\Rsign}$, regardless of whether $s_3\neq\act_2$, so $\act_3$ is informative. 

Under $\cost>0$, the assumption $l_0+l_{Qq}-l_q-l_{\cost}(1)<0$ ensures that player $2$ always follows own signal, because the desire to differ\footnote{
If $l_0-l_{Qq}-l_q< -l_{\cost}(1)$, which is implied by $l_0-2l_{Qq}+l_Q<-l_{\cost}(1)$ and $l_Q>l_{Qq}$, then player $2$ does not ignore $s_2$ just to ensure $\act_2\neq\act_1$, i.e.\ $\cost$ is small enough not to induce an anti-herding information cascade. 
} 
from player $1$ combines with the effect of a weak signal to outweigh the prior and the information derived from $\act_1$. This choice of player $2$ to follow $s_2$ increases the informativeness of $\act_2=\act_1$, but decreases that of $\act_2\neq\act_1$. The more informative event $\act_2=\act_1$ together with $l_0-2l_{Qq}+l_Q+l_{\cost}(1)<0$ induces player $3$ to herd, because even a strong signal plus the desire to differ $l_{\cost}(1)$ do not overcome the effect $2l_{Qq}$ of two ``average'' signals. 
If player $3$ herds after $\act_2=\act_1$, then so do all subsequent players, because they have the same signal strengths and desire to differ. 

The less informative $\act_2\neq\act_1$ under $\cost>0$ does not reduce player $3$'s herding, because even under $\cost=0$, player $3$ follows a strong signal after $\act_2\neq\act_1$. No additional assumptions are needed, because $\act_2\neq\act_1$ is either a strong signal (if $\cost=0$) or average (if $\cost>0$) favouring the opposite state to $\act_1$. The average signal from $\act_2\neq\act_1$ neutralises $\act_1$, so $l_q>l_0$ is sufficient for $\act_3$ to respond to even weak signals. The strong signal from $\act_2\neq\act_1$ under $\cost=0$ is neutralised by player $3$'s strong private signal $s_3\in\set{\Lsign,\Rsign}\setminus\set{\act_2}$, in which case $\act_3=\act_1$. On the other hand, if $s_3=\act_2$, then $\act_3=\act_2\neq\act_1$, so the action of player $3$ is informative in the $\cost=0$ case as well.

In Table~\ref{tab:l4a4}, $l_0+l_{Qq}+2l_{\neg Q}-l_Q<0$ on the line corresponding to history $\act_i=\ract$ is sufficient for $l_0-l_{Qq}-2l_{\neg Q}+l_Q>0$ on the $\act_i=\lact$ line. The intuition for these conditions is that a strong signal $s_5$ overwhelms the effect of an ``average'' signal from $\act_1$ plus two conflations ($\act_2$ and $\act_3$) of the three signals other than a strong one opposing $\act_1$. 
The condition $l_0+l_{Qq}-l_{\neg Q}>0$ for $\act_4(\ract,\lact,\lact,s_4)$ to respond to $s_4$ under $\cost=0$ always holds (so is omitted from the last line of Table~\ref{tab:l4a4}), because $l_0\geq0$ and $l_{Qq}>l_{\neg Q}$. 
The maintained assumption $l_0-l_{Qq}+l_q<0$ is logically independent of the condition $l_0-l_{Qq}+l_{\neg Q}<0$ ensuring an informative $\act_4(\lact,\ract,\ract,s_4)$ (penultimate line in Table~\ref{tab:l4a4}), because both $l_q>l_{\neg Q}$ and $l_q<l_{\neg Q}$ are possible. The reason why $l_0-l_{Qq}+l_{\neg Q}<0$ is sufficient for $\act_4(\lact,\ract,\ract,s_4)$ to respond to $s_4$ is that the strong signal from $\act_2\neq \act_1=\lact$ is cancelled by $s_4=\Lsign$, resulting in $l_4(\lact,\ract,\ract,\Lsign)<0$, but if $s_4\in\set{\rsign,\Rsign}$, then $l_4(\lact,\ract,\ract,s_4)>0$. 

The next lemma compares the informativeness of $\act_5$ under $\cost=0$ to the $\cost>0$ case. It complements Proposition~\ref{prop:herdincreases} by showing that in addition to the increased herding by the first four agents, player $5$ also responds less to signals. A similar result for player $6$ is subsequently derived in Lemma~\ref{lem:a6}. 
\begin{lemma}
\label{lem:a5}
If $l_0-l_{Qq}+l_{\neg Q}<0$ and $\act_5(\act^4,s_5)$ is informative under $\cost>0$, then $\act_5(\act^4,s_5)$ is also informative under $\cost=0$. 
\end{lemma}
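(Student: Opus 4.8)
The plan is to follow the route of the proof of Proposition~\ref{prop:herdincreases}(c): for each length-four history $\act^4$, tabulate the public log likelihood ratio $l_5(\act^4)$ and the best response $\act_5(\act^4,s_5)$ in both regimes, and read off the implication history by history.

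The first step is to cut the list of relevant histories down to size. Under $\cost>0$ and the maintained assumptions, $\act_5$ can be informative only after $\act_1\neq\act_2$: by Proposition~\ref{prop:herdincreases}(b), $\act_1=\act_2$ starts a herd at player~$3$. When $\act_1\neq\act_2$ we have $l_3(\act^2)=l_0$ and $f=\tfrac12$, so the game restarts with player~$3$ in the role of player~$1$ and player~$4$ in the role of player~$2$, now with $l_{\cost}$ evaluated at $\tfrac13,\tfrac23$ rather than $0,1$; the conditions maintained for Proposition~\ref{prop:herdincreases}(b) (in particular $l_0+l_{Qq}-l_q-l_{\cost}(\tfrac23)<0$, which keeps player~$4$ following its own signal at $f=\tfrac23$) together with $l_0-2l_{Qq}+l_Q+l_{\cost}(1)<0$ and the monotonicity of $l_{\cost}$ imply that $\act_3=\act_4$ starts a herd at player~$5$. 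Hence under $\cost>0$ an informative $\act_5$ forces both $\act_1\neq\act_2$ and $\act_3\neq\act_4$; for each of these four ``doubly alternating'' histories $l_5(\act^4)=l_0$ and $f=\tfrac12$, so $\act_5$ responds to $s_5$ because $l_q>l_0$.

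The second step is, for each of the four histories, to compute $l_5(\act^4)$ under $\cost=0$ and to check that $l_5(\act^4)-l_Q<0<l_5(\act^4)+l_Q$, which makes $\act_5$ informative in the $\cost=0$ regime as well. The $\cost=0$ strategies are already known from Proposition~\ref{prop:herdincreases}(a) and Table~\ref{tab:l4a4}: after $\act_1\neq\act_2$, player~$2$ has revealed the strong signal $\act_2$; player~$3$ is informative after every $\act^2$; and Table~\ref{tab:l4a4} records $\act_4$. Walking down this tree, $l_5(\act^4)$ is $l_0$ shifted by a short sum of $\pm l_Q,\pm l_{Qq},\pm l_{\neg Q}$; the increment contributed on the weak-signal branches of players~$3$ and~$4$ depends on parameters, but in each branch the resulting $l_5(\act^4)$ stays strictly between $-l_Q$ and $l_Q$ by the maintained inequalities $l_q>l_0$, $l_{Qq}>l_q$, $l_{Qq}>l_{\neg Q}$, $l_Q>l_{Qq}$, $l_0-l_{Qq}+l_q<0$ and $l_0+l_{Qq}+l_{\neg Q}-l_Q<0$ --- except for the history $(\lact,\ract,\ract,\lact)$, where $\act_4$ is informative under $\cost=0$ exactly when $l_0-l_{Qq}+l_{\neg Q}<0$, and this same hypothesis then gives $l_5(\lact,\ract,\ract,\lact)-l_Q<0$. (Its mirror $(\ract,\lact,\lact,\ract)$ needs no extra assumption, as Table~\ref{tab:l4a4} already records that $\act_4$ is informative there because $l_0+l_{Qq}-l_{\neg Q}>0$ always.) So $l_0-l_{Qq}+l_{\neg Q}<0$ is precisely the hypothesis missing from the maintained assumptions, exactly as in Proposition~\ref{prop:herdincreases}(c).

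I expect the main obstacle to be the first step --- the iterated ``restart'' reduction --- rather than the second. One has to confirm that after $\act_1\neq\act_2$ player~$4$ really follows its own signal at $f=\tfrac23$ (and not only at $f=1$), so that $\act_3=\act_4$ genuinely produces the public belief $l_0\pm 2l_{Qq}$ and a herd at player~$5$, and that a partial herd by an earlier player cannot inject an extra $l_{\neg Q}$ into the public belief and leave $\act_5$ informative under $\cost>0$ in a way not matched under $\cost=0$. Once the relevant histories are confirmed to be just the four doubly alternating ones, the $\cost=0$ bounds are routine checks on linear inequalities among $l_0,l_q,l_Q,l_{Qq},l_{\neg Q}$.
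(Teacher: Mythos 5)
Your strategy is sound and the underlying computations coincide with the paper's, but you organize the argument differently: you first use the restart structure under $\cost>0$ to shrink the relevant set to the four doubly alternating histories $\act^4$ (where $l_5(\act^4)=l_0$ and $f=\tfrac12$), and only then verify informativeness of $\act_5$ under $\cost=0$ for those four. The paper instead walks through every continuation of each $\act^3$ row of Table~\ref{tab:l4a4}, including histories such as $(\ract,\lact,\lact,\lact)$ and $(\lact,\ract,\ract,\ract)$ where $\act_5$ may be uninformative under $\cost=0$ but is also uninformative under $\cost>0$; your reduction lets you skip these, and you are in fact more explicit than the paper about the role of $l_0+l_{Qq}-l_q-l_{\cost}(\tfrac23)<0$ in guaranteeing that $\act_3=\act_4$ starts a herd at player $5$ (without it, $(\lact,\ract,\ract,\ract)$ can leave $\act_5$ informative under $\cost>0$ and the lemma can fail, so this condition must indeed be treated as maintained). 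Two caveats. First, your second step is only sketched: the $\cost=0$ value of $l_5(\act^4)$ after an alternating history branches on the signs of quantities like $l_0\pm l_{Qq}\mp l_Q\pm l_q$ that the maintained assumptions do not pin down, so a complete proof must enumerate these sub-cases (as the paper does for $(\ract,\lact,\lact)$ and $(\lact,\ract,\ract)$); I checked that each branch does land strictly between $-l_Q$ and $l_Q$, using in one branch that $l_0-2l_{Qq}+l_Q+l_{\cost}(1)<0$ forces $l_0+2l_{Qq}-l_Q>0$. Second, your account of where $l_0-l_{Qq}+l_{\neg Q}<0$ bites is slightly off: once $\act_4$ is informative after $(\lact,\ract,\ract)$, the bound $l_5(\lact,\ract,\ract,\lact)-l_Q<0$ follows from the maintained inequalities alone ($l_0<l_{Qq}$ and $l_{\neg Q}<l_Q$); the hypothesis is really needed to make $\act_4$ informative there in the first place, so that $(\lact,\ract,\ract,\lact)$ is on path under $\cost=0$ and $l_5$ is not stuck at $l_0-l_{Qq}+l_Q+l_{\neg Q}$, which would leave $l_5-l_Q=l_0-l_{Qq}+l_{\neg Q}$ possibly positive and $\act_5$ uninformative.
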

\begin{proof}
%
History $\act^3=(\ract,\lact,\lact)$. 
Under $\cost=0$, if $l_4(\ract,\lact,\lact,\rsign)=l_0+l_{Qq}-l_Q-l_{\neg Q}+l_{q}<0$, then player $5$'s log likelihood ratios continuing from $\act^3=(\ract,\lact,\lact)$ are $l_5(\ract,\lact,\lact,\lact)=l_0+l_{Qq}-l_Q-l_{\neg Q}-l_{\neg Q}$ and $l_5(\ract,\lact,\lact,\ract)=l_0+l_{Qq}-l_Q-l_{\neg Q}+l_{Q}$, because player $4$ chooses $\lact$ after a weak signal $s_4=\rsign$. In this case, $\act_5$ responds to $s_5$ if $l_0+l_{Qq}-2l_{\neg Q}>0$, because $l_5(\ract,\lact,\lact,\lact,\Rsign)=l_0+l_{Qq}-2l_{\neg Q}$ and $l_5(\ract,\lact,\lact,\lact,\Lsign)=l_0+l_{Qq}-2l_Q-2l_{\neg Q}<0$. After $(\ract,\lact,\lact,\ract)$, player $5$'s action always responds to the private signal. By comparison, recall that when $\cost>0$, player $5$ (and any odd player) herds if the preceding two players took the same action.

On the other hand, if $l_4(\ract,\lact,\lact,\rsign)>0$, then $l_5(\ract,\lact,\lact,\lact)=l_0+l_{Qq}-l_Q-l_{\neg Q}-l_{Qq}$ and $l_5(\ract,\lact,\lact,\ract)=l_0+l_{Qq}-l_Q-l_{\neg Q}+l_{Qq}$, because player $4$ chooses $\lact$ after a weak signal $s_4=\rsign$. In this case, $\act_5$ responds to $s_5$ if $l_5(\ract,\lact,\lact,\lact,\Rsign)=l_0-l_{\neg Q}>0$ (again, $\act_5$ always responds if the $\act^4$ contains an equal number of $\lact,\ract$). The condition $l_0-l_{\neg Q}>0$ fails in Example~\ref{ex:1} above, so player $5$ herds. When $\cost>0$, player $5$ always herds after history $(\ract,\lact,\lact,\lact)$. 

History $\act^3=(\lact,\ract,\ract)$. 
If $l_0-l_{Qq}+l_Q+l_{\neg Q}-l_q>0$, then $l_5(\lact,\ract,\ract,\lact)=l_0-l_{Qq}+l_Q+l_{\neg Q}-l_{Q}$ and $l_5(\lact,\ract,\ract,\ract)=l_0-l_{Qq}+l_Q+l_{\neg Q}+l_{\neg Q}$, because $\act_4(\lact,\ract,\ract,\lsign)=\ract$. The condition for $\act_5$ to respond to $s_5$ is $l_5(\lact,\ract,\ract,\ract,\Lsign)=l_0-l_{Qq}+l_Q+l_{\neg Q}+l_{\neg Q}-l_Q<0$, the same as for $\act_4$ to be informative after $\act^3=(\lact,\ract,\ract)$. 

In contrast, if $l_0-l_{Qq}+l_Q+l_{\neg Q}-l_q<0$, then $l_5(\lact,\ract,\ract,\lact)=l_0-l_{Qq}+l_Q+l_{\neg Q}-l_{Qq}$ and $l_5(\lact,\ract,\ract,\ract)=l_0-l_{Qq}+l_Q+l_{\neg Q}+l_{Qq}$. Action $\act_5$ is always informative after $\lact,\ract,\ract,\lact$, but never after $\lact,\ract,\ract,\ract$ (just like with $\cost>0$), because $l_5(\lact,\ract,\ract,\ract,\Lsign) =l_0+l_{\neg Q}>0$. 

Histories $\act^3=(\ract,\lact,\ract)$ and $(\lact,\ract,\lact)$ lead to $l_4(\act^3)=l_0\pm l_{Qq}$, so player $4$ faces the same decision problem as player $2$. Thus continuing from these histories, any player herds more under $\cost>0$ than under $\cost=0$. 
%
For histories in the top half of Table~\ref{tab:l4a4}, herding has already started with player $3$, so all subsequent players unambiguously herd more under $\cost>0$. 
\end{proof}

Lemma~\ref{lem:a6} compares the informativeness of the action $\act_6$ of player $6$ under $\cost=0$ and $\cost>0$, analogously to Lemma~\ref{lem:a5} for $\act_5$. 
\begin{lemma}
\label{lem:a6}
If $l_0+2l_{Qq}-l_{\neg Q}-l_Q<0$ and $\act_6(\act^5,s_6)$ is informative under $\cost>0$, then $\act_6(\act^5,s_6)$ is also informative under $\cost=0$. 
\end{lemma}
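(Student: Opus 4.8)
The plan is to run the argument behind Lemma~\ref{lem:a5} one period later. \emph{Reduction.} Under $\cost>0$, Proposition~\ref{prop:herdincreases}(b) and its restart argument show that a herd has begun by period $5$ unless $\act_1\neq\act_2$ \emph{and} $\act_3\neq\act_4$; in every other case $\act_6$ is uninformative under $\cost>0$, so the implication holds vacuously. Hence it suffices to treat the four ``alternating-pair'' histories $\act^4\in\set{(\lact,\ract,\lact,\ract),\,(\lact,\ract,\ract,\lact),\,(\ract,\lact,\lact,\ract),\,(\ract,\lact,\ract,\lact)}$ together with the two continuations $\act_5\in\set{\lact,\ract}$. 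Along any such history the game restarts at players $3$ and $5$ under $\cost>0$, so player $5$ follows own signal and $l_6(\act^5)=l_0\pm l_{Qq}$ with congestion $l_{\cost}(\tfrac{3}{5})$ or $l_{\cost}(\tfrac{2}{5})$. For each such $\act^5$ I would show that either $\act_6$ is informative under $\cost=0$, or else $\act_6$ is uninformative under $\cost>0$ (making the implication vacuous), as in the corresponding step of Lemma~\ref{lem:a5}.

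\emph{The $\cost=0$ computation.} For each leaf I would trace the best responses of players $1$--$5$ under $\cost=0$ via Proposition~\ref{prop:herdincreases}(a): player $1$ follows own signal; player $2$ follows $\act_1$ unless its signal is strong and disagrees; and players $3,4,5$ each either flip only on a strong disagreeing signal or also on a weak one, a split governed by the signs of auxiliary quantities such as $l_0-l_{Qq}+l_Q-l_q$, $l_0+l_{Qq}-l_Q+l_q$, $l_0+l_{Qq}-l_Q-l_{\neg Q}+l_q$ and $l_0+2l_{Qq}-l_Q-l_{\neg Q}+l_q$. Each branch fixes whether an action---say $\act_3=\ract$ after $(\lact,\ract)$---moves the public log likelihood ratio by $\pm l_Q$ (it revealed a strong signal), by $\pm l_{Qq}$ (it revealed only the favoured state), or by $\pm l_{\neg Q}$ (an agreeing action conflating three signals), exactly as in Table~\ref{tab:l4a4}. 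In the ``nice'' branches the belief collapses back to $l_0$ or $l_0\pm l_{Qq}$ after player $3$ or $4$, so player $6$ faces the decision problem of player $2$ or of player $3$ and is informative by Lemma~\ref{lem:a1a2} and Proposition~\ref{prop:herdincreases}(a); the remaining branches leave an explicit $l_6(\act^5)$ equal to $l_0$ plus a short signed sum of $l_Q,l_{Qq},l_{\neg Q}$, and there $\act_6$ is informative under $\cost=0$ iff $l_6(\act^5)-l_Q<0<l_6(\act^5)+l_Q$.

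\emph{Collecting the inequalities.} Across the remaining branches the ``$l_6-l_Q<0$'' halves reduce to one of $l_0+l_{Qq}-l_Q<0$ (implied by Proposition~\ref{prop:herdincreases}(a)), $l_0-l_{Qq}+l_{\neg Q}<0$ (Lemma~\ref{lem:a5}'s condition, which I would also maintain), $l_0+2l_{Qq}-2l_Q<0$, $l_0+3l_{Qq}-2l_Q-l_{\neg Q}<0$, or $l_0+2l_{Qq}-l_{\neg Q}-l_Q<0$; the last is the new hypothesis and is the binding one. It appears, e.g., along $(\ract,\lact,\lact,\ract,\ract)$ in the branch where players $4$ and $5$ each reveal a strong signal in their last step, so that $l_6=l_0+2l_{Qq}-l_{\neg Q}$, and it dominates the others using $l_0\geq0$ and $l_{\neg Q}<l_{Qq}<l_Q$ (for instance $l_0+2l_{Qq}-2l_Q<l_0+2l_{Qq}-l_Q-l_{\neg Q}$, and $l_0+3l_{Qq}-2l_Q-l_{\neg Q}=(l_0+2l_{Qq}-l_Q-l_{\neg Q})+(l_{Qq}-l_Q)$). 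The ``$l_6+l_Q>0$'' halves come down to $l_{\neg Q}<l_{Qq}$, $l_0\geq0$ and the maintained $l_0+l_{Qq}+l_{\neg Q}-l_Q<0$, with one borderline leaf ($l_6=l_0-l_Q$) settled by the tie-breaking convention that an indifferent agent follows own signal; the few leaves in which $\act_6$ is nonetheless uninformative under $\cost=0$ are exactly the ones in which $\act_6$ is uninformative under $\cost>0$ too.

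\emph{Main obstacle.} The difficulty is bookkeeping, not any single hard step. Because there is no restart under $\cost=0$, $l_6(\act^5)$ depends on the precise strategy of each of players $1$--$5$, several of which branch on signs not pinned down by the maintained assumptions, so the case tree is substantially larger than in Lemma~\ref{lem:a5}. Organizing the branches, keeping the relevant histories on the $\cost=0$ equilibrium path (Lemma~\ref{lem:a5}'s condition $l_0-l_{Qq}+l_{\neg Q}<0$ is handy here, sparing the need to specify off-path beliefs), and verifying in each remaining leaf that the required $\cost=0$ inequality follows from $l_0+2l_{Qq}-l_{\neg Q}-l_Q<0$ together with the maintained assumptions, is the delicate part; the individual inequalities are then routine given $l_q<l_{Qq}<l_Q$, $l_{\neg Q}<l_{Qq}$ and $l_0\geq0$.
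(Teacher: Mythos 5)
Your plan follows essentially the same route as the paper's proof: restrict attention to histories in which no herd has started under $\cost>0$ by period $5$ (elsewhere the implication is vacuous), trace the $\cost=0$ best responses of players $1$--$5$ with case splits on the signs of quantities such as $l_4(\ract,\lact,\lact,\rsign)=l_0+l_{Qq}-l_Q-l_{\neg Q}+l_q$, write each $l_6(\act^5)$ as $l_0$ plus a signed sum of $l_Q,l_{Qq},l_{\neg Q}$, and check the signs of $l_6(\act^5)\pm l_Q$. You also correctly locate the binding inequality: the paper obtains $l_6(\ract,\lact,\lact,\ract,\ract)=l_0+2l_{Qq}-l_{\neg Q}$ in the branch where $l_0+l_{Qq}-l_Q-l_{\neg Q}+l_q<0$ and $l_0+l_{Qq}-l_{\neg Q}-l_q<0$, and there the new hypothesis is exactly what is needed, with the other candidate conditions ($l_0+3l_{Qq}-2l_Q-l_{\neg Q}<0$, etc.) implied by it via $l_{Qq}<l_Q$. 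Two differences from the paper. First, its reduction is tighter than yours: by the observation in Lemma~\ref{lem:a5}, after $(\lact,\ract,\lact)$ and $(\ract,\lact,\ract)$ the public log likelihood ratio under $\cost=0$ returns to $l_0\mp l_{Qq}$, so player $4$ faces player $2$'s problem and self-similarity disposes of those two alternating histories; only $\act^4=(\lact,\ract,\ract,\lact)$ and $(\ract,\lact,\lact,\ract)$ need explicit treatment. Second, your description of the critical branch is slightly off: to reach $l_6=l_0+2l_{Qq}-l_{\neg Q}$, player $4$'s action must reveal a strong signal ($+l_Q$, cancelling the $-l_Q$ in $l_4$), but player $5$'s action reveals only which state the signal favours ($+l_{Qq}$); if player $5$ also revealed a strong signal, the relevant condition would be $l_0+l_{Qq}-l_{\neg Q}<0$ rather than the stated hypothesis.

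The genuine shortfall is that the plan defers the case analysis, which is the entire content of the proof. You list the inequalities that should emerge, but do not verify, branch by branch, that each $l_6(\act^5)$ takes one of the claimed forms and that each required sign follows from $l_0+2l_{Qq}-l_{\neg Q}-l_Q<0$ together with the maintained assumptions; the paper's proof consists precisely of that verification over the sub-branches of the two relevant histories. Nothing in your outline appears to fail when carried out, but as written it is a roadmap rather than a proof.
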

\begin{proof}
Based on Proposition~\ref{prop:herdincreases} and Lemma~\ref{lem:a5}, the only histories continuing from which player $6$ could conceivably herd more under $\cost=0$ are $\act^4=(\lact,\ract,\ract,\lact)$ and $(\ract,\lact,\lact,\ract)$. 
In these continuations, under $\cost>0$, player $6$ faces the same decision as player $2$, but this need not be the case under $\cost=0$. 
Consider first $\act^4=(\ract,\lact,\lact,\ract)$. Separate two cases based on the sign of $l_4(\ract,\lact,\lact,\rsign)=l_0+l_{Qq}-l_Q-l_{\neg Q}+l_{q}$. 

If $l_0+l_{Qq}-l_Q-l_{\neg Q}+l_{q}<0$, then $l_5(\ract,\lact,\lact,\ract)=l_0+l_{Qq}-l_{\neg Q}$. In this case, if $l_0+l_{Qq}-l_{\neg Q}-l_q<0$, then $l_6(\ract,\lact,\lact,\ract,\lact)=l_0-l_{\neg Q}$ (so $\act_6$ is informative) and $l_6(\ract,\lact,\lact,\ract,\ract)=l_0+2l_{Qq}-l_{\neg Q}$. Then $\act_6$ is informative if $l_0+2l_{Qq}-l_{\neg Q}-l_Q<0$. 

The other case given $l_0+l_{Qq}-l_Q-l_{\neg Q}+l_{q}<0$ is $l_0+l_{Qq}-l_{\neg Q}-l_q>0$, which implies $l_6(\ract,\lact,\lact,\ract,\lact)=l_0+l_{Qq}-l_{\neg Q}-l_Q$ and $l_6(\ract,\lact,\lact,\ract,\ract)=l_0+l_{Qq}-l_{\neg Q}+l_{\neg Q}$, for both of which, $\act_6$ is informative. 

If $l_0+l_{Qq}-l_Q-l_{\neg Q}+l_{q}>0$, then $l_5(\ract,\lact,\lact,\ract)=l_0+2l_{Qq}-l_Q-l_{\neg Q}$. In this case, if $l_0+2l_{Qq}-l_Q-l_{\neg Q}-l_q<0$ (implied by $l_0+2l_{Qq}-l_Q-l_{\neg Q}<0$) and $l_0+2l_{Qq}-l_Q-l_{\neg Q}+l_q>0$ (implied by $l_0+l_{Qq}-l_Q-l_{\neg Q}+l_{q}>0$), then $l_6(\ract,\lact,\lact,\ract,\lact)=l_0+l_{Qq}-l_Q-l_{\neg Q}$ (so $\act_6$ is informative) and $l_6(\ract,\lact,\lact,\ract,\ract)=l_0+3l_{Qq}-l_Q-l_{\neg Q}$. Therefore if $l_0+3l_{Qq}-2l_Q-l_{\neg Q}<0$ (which is implied by $l_0+2l_{Qq}-l_Q-l_{\neg Q}<0$), then $\act_6$ is informative. 

Consider next $\act^4=(\lact,\ract,\ract,\lact)$, so $l_4(\lact,\ract,\ract,\lsign)=l_0-l_{Qq}+l_Q+l_{\neg Q}-l_{q}$. 
If $l_0-l_{Qq}+l_Q+l_{\neg Q}-l_{q}>0$, then $l_5(\lact,\ract,\ract,\lact)=l_0-l_{Qq}+l_{\neg Q}$. In this case, if $l_0-l_{Qq}+l_{\neg Q}+l_q>0$, then $l_6(\lact,\ract,\ract,\lact,\ract)=l_0+l_{\neg Q}$ (so $\act_6$ is informative) and $l_6(\lact,\ract,\ract,\lact,\lact)=l_0-2l_{Qq}+l_{\neg Q}$. Then $\act_6$ is informative if $l_0-2l_{Qq}+l_{\neg Q}+l_Q>0$, sufficient for which is $l_0+2l_{Qq}-l_{\neg Q}-l_Q<0$. 

The other case given $l_0-l_{Qq}+l_Q+l_{\neg Q}-l_{q}>0$ is $l_0-l_{Qq}+l_{\neg Q}+l_q<0$, which implies $l_6(\lact,\ract,\ract,\lact,\ract)=l_0-l_{Qq}+l_{\neg Q}+l_Q$ and $l_6(\lact,\ract,\ract,\lact,\lact)=l_0-l_{Qq}+l_{\neg Q}-l_{\neg Q}$, so $\act_6(\lact,\ract,\ract,\lact,\lact,s_6)$ is informative. 
Action $\act_6(\lact,\ract,\ract,\lact,\ract,s_6)$ is informative if $l_0-l_{Qq}+l_{\neg Q}<0$, which is implied by $l_0-l_{Qq}+l_{\neg Q}+l_q<0$. 

If $l_0-l_{Qq}+l_Q+l_{\neg Q}-l_{q}<0$, then $l_5(\lact,\ract,\ract,\lact)=l_0-2l_{Qq}+l_Q+l_{\neg Q}$. In this case, if $l_0-2l_{Qq}+l_Q+l_{\neg Q}+l_q>0$ (which is implied by $l_0+2l_{Qq}-l_{\neg Q}-l_Q<0$) and $l_0-2l_{Qq}+l_Q+l_{\neg Q}-l_q<0$ (implied by $l_0-l_{Qq}+l_Q+l_{\neg Q}-l_{q}<0$), then $l_6(\lact,\ract,\ract,\lact,\ract)=l_0-l_{Qq}+l_Q+l_{\neg Q}$, so $\act_6(\lact,\ract,\ract,\lact,\ract,s_6)$ is informative, because $l_0-l_{Qq}+l_Q+l_{\neg Q}-l_{q}<0$ implies $l_0-l_{Qq}+l_{\neg Q}<0$. 
Also, $l_6(\lact,\ract,\ract,\lact,\lact)=l_0-3l_{Qq}+l_Q+l_{\neg Q}$, thus if $l_0-3l_{Qq}+2l_Q+l_{\neg Q}>0$ (sufficient for which is $l_0+2l_{Qq}-l_{\neg Q}-l_Q<0$), then $\act_6$ is informative. 
\end{proof}
The assumption $l_0+2l_{Qq}-l_{\neg Q}-l_Q<0$ in Lemma~\ref{lem:a6} that suffices for player $6$ to herd more under $\cost>0$ is satisfied in the Example~\ref{ex:1} above. Therefore the set of histories in which the first six players herd under $\cost>0$ is a proper superset of the histories in which they herd under $\cost=0$. 

In some long enough histories, the probability of herding under $\cost=0$ may overtake that under $\cost>0$. This is because the effective congestion cost decreases when $f$ approaches $\frac{1}{2}$, which occurs each time the history lengthens by two actions without a herd having started. \cite{eyster+2014} show that as the congestion cost approaches zero, learning increases in the limit as time goes to infinity. The comparison of the limiting probabilities of learning under $\cost=0$ and $\cost>0$ is complicated, because it depends on the speed of convergence of $|f-\frac{1}{2}|$ in the present paper relative to that of the congestion cost in \cite{eyster+2014}. What is clear is that discounting the benefit of agents in the far future learning makes the welfare impact of the large initial increase in herding under $\cost>0$ overwhelm any eventual overtaking of the learning probability under $\cost=0$. In other words, the discounted probability of correct decisions is significantly smaller when there is a desire to differ from previous movers.

\section{Discussion}
\label{sec:discussion}

The result that herding may increase with the desire to differ from previous movers is robust to varying the informativeness of signals or the congestion cost within some bounds. The informativeness and cost may also differ to some extent across players. Unboundedly informative signals or a strong enough preference for non-conformity break herding, as established in the previous literature. If the congestion cost is small enough, then it does not affect players' actions, because it does not outweigh the weakest of the finitely many signals. 

In some applications, the congestion cost depends only on the actions of some preceding agents, not all. For example, if a service provider is capacity constrained and can serve only $m$ agents at a time or finishes the service in at most $m$ periods, then an agent's payoff only depends on the choices of the $m$ immediate predecessors. The desire to differ may increase conformity also in this case, as is clear from redefining $f$ in Section~\ref{sec:results} to be the fraction of agents among the preceding $m$ who choose $\ract$. 

Even if congestion depends only on the immediately preceding agent, a more informative $\act_2=\act_1$ can motivate player $3$ to herd. The less informative $\act_2\neq\act_1$ cannot reduce player $3$'s herding compared to the $\cost=0$ case if $3$ does not herd after $\act_2\neq\act_1$ under $\cost=0$. Thus the overall probability of herding may increase, as in the baseline model. The proofs simplify, because each time the belief returns to the prior, the subgame is identical to the whole game. In particular, the condition $l_0+l_{Qq}-l_q-l_{\cost}\left(\frac{i+1}{2i+1}\right)<0$ in Proposition~\ref{prop:herdincreases} for herding to start in period $2i+3$ conditional on not having started earlier may be omitted w.l.o.g., because it reduces to $l_0+l_{Qq}-l_q-l_{\cost}\left(1\right)<0$. 

Qualitatively similar results also obtain if congestion depends on a discounted (or otherwise weighted) average of the actions of previous movers. Again, player $3$ faces the same problem as in the baseline model, and the problems of subsequent odd players only differ in the effective congestion cost. 


\appendix
\section{Herding reduced by the desire to conform}
\label{sec:conform}

This section shows that a preference to match the actions of preceding agents may in fact reduce herding. The idea is similar to why the desire to differ may increase herding---the actions of previous players become more informative after some histories, less after others. In the current section, it is the less informative actions that matter. A strong signal overwhelms the effect of two previous less informative actions plus the desire to conform, but does not outweigh the more informative actions in the absence of a preference to follow previous movers. 

Only the differences from the setup in Section~\ref{sec:model} are mentioned. 
Payoffs are 
\begin{align*} 
u_{i}(\act^{i},\theta) =\text{\textbf{1}}\set{\act_{i} =\theta} {\color{red}+}\frac{\cost}{i-1}\sum_{j=1}^{i-1}\text{\textbf{1}}\set{\act_{j} =\act_{i}},
\end{align*} 
where $\cost\geq0$ as before, but here the payoff from an action 
increases in the fraction $f$ of previous agents taking that action. 

There are six possible signal realisations $s_i\in\set{\Lsign,\lsign,\losign,\rosign,\rsign,\Rsign}$, with $\lsign,\rsign$ interpreted as medium strength and $\losign,\rosign$ as weak. Signals $\Lsign,\lsign,\losign$ favour state $\lstate$, the others $\rstate$. 
The respective unconditional probabilities of a strong, medium and weak signal are $p_S:=\Pr(\Lsign)+\Pr(\Rsign)$, $p_s:=\Pr(\lsign)+\Pr(\lsign)$ and $p_{\sigma}:=\Pr(\losign)+\Pr(\rosign)$. The conditional probabilities are $\Pr(\Lsign|\lstate)=\Pr(\Rsign|\rstate)=:Q$, $\Pr(\lsign|\lstate)=\Pr(\rsign|\rstate)=:q$ and $\Pr(\losign|\lstate)=\Pr(\rosign|\rstate)=:\eta$. Assume $\frac{1}{2}<\frac{\eta}{p_{\sigma}}<\frac{q}{p_s}<\frac{Q}{p_S}<1$, which justifies the interpretations of the signals. 
Define 
\begin{description}
\item $l_q:=\ln q-\ln (p_s-Q) \in(0,l_Q)$, 
\item $l_{\eta}:=\ln \eta-\ln (p_{\sigma}-\eta) \in(0,l_q)$, 
\item $l_{Qq}:= \ln(Q+q)-\ln(p_S+p_s-Q-q)\in(l_q,l_Q)$, 
\item $l_{Qq\eta}:= \ln(Q+q+\eta)-\ln(1-Q-q-\eta)\in(l_{\eta},l_Q)$,
\item $l_{\neg qQ}:= \ln(p_{\sigma}+q+Q)-\ln(1-q-Q) \in(0,l_{Qq\eta})$,
\item $l_{\neg Q}:= \ln(p_s+p_{\sigma}+Q)-\ln(1-Q)\in(0,l_{\neg qQ})$. 
\end{description}

The next result is analogous to Proposition~\ref{prop:herdincreases} and provides sufficient conditions for herding to decrease when conformism is introduced. 
\begin{prop}
\label{prop:conform}
Assume $l_{\eta}>l_{0}$. 
If $\cost=0$, $l_0+l_{Qq\eta}-l_q<0$, $l_0-l_{Qq\eta}+l_{\eta}<0$ and $l_0-l_{Qq\eta}-l_{\neg qQ}+l_{Q}<0$, then $\act_3$ is uninformative after $\act_{2}=\act_1$, the probability of which is $(Q+q+\eta)(p_{\sigma}+q+Q)+(1-Q-q-\eta)(1-q-Q)$. 
\\ If $\cost>0$, $l_0-l_{Qq\eta}+l_q-l_{\cost}(1)<0$ and $l_0+l_{Qq\eta}+l_{\neg Q}-l_{Q}+l_{\cost}(1)<0$, then $\act_3$ is informative after any history. 
\end{prop}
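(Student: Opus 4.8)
The plan is to mirror the structure of the proof of Proposition~\ref{prop:herdincreases}, swapping the sign of the social-preference term and working with the six-signal information structure. The two halves are essentially independent: the $\cost=0$ half shows an information cascade forms after $\act_2=\act_1$, and the $\cost>0$ half shows the cascade is broken by the desire to conform.

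For the $\cost=0$ part, I would first establish that under $l_\eta>l_0$ player~$1$ follows her own signal: $\act_1(\Lsign)=\act_1(\lsign)=\act_1(\losign)=\lact$ and the three $\rstate$-favouring signals give $\ract$. Hence from player~$2$'s viewpoint $\act_1$ is worth $\pm l_{Qq\eta}$ (the log-likelihood of ``signal favours $\rstate$'', conflating strong, medium and weak). Next, using $l_0+l_{Qq\eta}-l_q<0$ and $l_0-l_{Qq\eta}+l_\eta<0$, I would show player~$2$'s best response after $\act_1$ is: follow $\act_1$ unless the private signal is strong and opposes $\act_1$. So after $\act_2=\act_1=\ract$ say, player~$3$'s public log-likelihood is $l_0+l_{Qq\eta}+l_{\neg Q}$ — two ``favours $\rstate$'' conflations, the second one only $l_{\neg Q}$ because $\act_2=\act_1$ merely rules out the opposing strong signal. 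Then $l_0-l_{Qq\eta}-l_{\neg qQ}+l_Q<0$ (after noting, as in part~(a), that the symmetric condition for the $\lact\lact$ branch follows by antisymmetry of the public LLR around $l_0$ — careful, here the public LLR after $\act_1=\act_2$ is \emph{not} symmetric about $0$, so I must redo this for each of the two branches) gives that even a private strong signal opposing the two matching actions cannot overturn the belief, so $\act_3$ is uninformative. Finally the probability that $\act_2=\act_1$ is the probability the second signal favours the same state as the first \emph{or} is strong-agreeing: conditioning on $\theta=\rstate$ gives $\Pr(\act_1=\ract)=Q+q+\eta$ and $\Pr(\act_2=\ract\mid\act_1=\ract)=\Pr(s_2\in\{\Rsign,\rsign,\rosign\}\text{ or }s_2=\Rsign)$; a short count yields $(Q+q+\eta)(p_\sigma+q+Q)+(1-Q-q-\eta)(1-q-Q)$, and by symmetry the same expression in state $\lstate$.

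For the $\cost>0$ part, I would use $l_0-l_{Qq\eta}+l_q-l_{\cost}(1)<0$ to show that now player~$2$ follows \emph{any} signal regardless of $\act_1$: matching $\act_1$ is penalised, so even a medium or weak signal opposing $\act_1$ is enough, and — as in the footnote-style remark in the main text — one checks $\cost$ is small enough (the hypothesis only needs the stated inequality) that player~$2$ does not anti-herd on a strong signal agreeing with $\act_1$ either. Hence after $\act_2=\act_1=\ract$ the public log-likelihood is $l_0+2l_{Qq\eta}$ and $f=1$, and the condition $l_0+l_{Qq\eta}+l_{\neg Q}-l_Q+l_{\cost}(1)<0$... wait — that is not quite the right quantity; the public LLR after $\act_1=\act_2$ is $l_0+2l_{Qq\eta}$, so the relevant inequality should be $l_0+2l_{Qq\eta}-l_Q+l_{\cost}(1)<0$, and I would verify $l_0+l_{Qq\eta}+l_{\neg Q}-l_Q+l_{\cost}(1)<0$ is the stated sufficient form (it is, since $l_{Qq\eta}\le l_{\neg Q}$ need not hold — so actually the hypothesis as written must be the one that makes $\act_3(\ract,\ract,\Lsign)=\ract$ directly, and I should present it verbatim rather than strengthen). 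With $\act_3$ then responding to a strong signal after $\act_2=\act_1$, and responding to even weak signals after $\act_2\neq\act_1$ (public LLR back to $l_0$, $f=1/2$, use $l_\eta>l_0$), $\act_3$ is informative after every length-$2$ history.

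The main obstacle is the bookkeeping around which conflated-signal log-likelihood ($l_{Qq\eta}$, $l_{\neg qQ}$, $l_{\neg Q}$) is the correct one at each node, since with $\act_2=\act_1$ the public belief is off the symmetry axis and the tidy antisymmetry shortcut from part~(a) of Proposition~\ref{prop:herdincreases} no longer applies — I expect I will need to write out all of player~$3$'s private log-likelihoods $l_3(\act^2,s_3)$ for $\act^2=(\ract,\ract)$ and for $\act^2=(\lact,\lact)$ separately, as a small table, and check the sign of each. Everything else is a routine re-run of the earlier argument with $+\cost$ in place of $-\cost$.
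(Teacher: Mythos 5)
Your architecture (mirror Proposition~\ref{prop:herdincreases}; the result hinges on how informative $\act_2=\act_1$ is with and without the social preference) and your computation of $\Pr(\act_2=\act_1)$ are right, but you have inverted player~$2$'s best response in \emph{both} halves, which breaks the mechanism. In the $\cost=0$ half, the hypotheses $l_0+l_{Qq\eta}-l_q<0$ and $l_0-l_{Qq\eta}+l_\eta<0$ say that player~$2$ overturns $\act_1$ on a strong \emph{or medium} opposing signal and follows $\act_1$ only on a weak one (e.g.\ $l_2(\ract,\lsign)=l_0+l_{Qq\eta}-l_q<0$ forces $\act_2(\ract,\lsign)=\lact$). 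Hence $\act_2=\act_1=\ract$ rules out both $\Lsign$ and $\lsign$, and the update is $l_{\neg qQ}$, not $l_{\neg Q}$: the correct public LLR is $l_0+l_{Qq\eta}+l_{\neg qQ}$. Your version "follow $\act_1$ unless the private signal is strong and opposes" yields $l_0+l_{Qq\eta}+l_{\neg Q}$, and since $l_{\neg Q}<l_{\neg qQ}$ the stated hypothesis $l_0-l_{Qq\eta}-l_{\neg qQ}+l_Q<0$ would not deliver the cascade from there. (The asymmetry you worry about is handled by $l_0\ge 0$: $l_0-l_{Qq\eta}-l_{\neg qQ}+l_Q<0$ forces $l_Q<l_{Qq\eta}+l_{\neg qQ}$ and hence $l_0+l_{Qq\eta}+l_{\neg qQ}-l_Q>0$, so one inequality covers both branches.)

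In the $\cost>0$ half the sign error is explicit: here matching $\act_1$ is \emph{rewarded}, not penalised, so the cutoff after $\act_1=\lact$ rises to $l_\cost(1)$, and $l_0-l_{Qq\eta}+l_q-l_\cost(1)<0$ says $\act_2(\lact,\rsign)=\lact$ --- a medium opposing signal does \emph{not} overturn. Player~$2$ therefore follows $\act_1$ unless the opposing signal is strong, so $\act_2=\act_1=\ract$ reveals only $s_2\neq\Lsign$ and the public LLR is exactly $l_0+l_{Qq\eta}+l_{\neg Q}$; the second hypothesis then applies verbatim to give $\act_3(\ract,\ract,\Lsign)=\lact$ (not $\ract$ as you wrote --- informativeness requires the strong signal to flip the action), with the $(\lact,\lact)$ branch again following from $l_0\ge 0$. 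Your "player~$2$ follows any signal, so the public LLR is $l_0+2l_{Qq\eta}$" is the congestion-model behaviour from the main text, and your ensuing attempt to reconcile $l_0+2l_{Qq\eta}$ with the $l_{\neg Q}$ in the hypothesis is a symptom of that mismatch. A further small slip: after $\act_2\neq\act_1$ under $\cost>0$ the public LLR is $l_0\mp l_{Qq\eta}\pm l_Q$, not $l_0$ (a deviation reveals a strong signal); informativeness there still follows because $\Lsign$ and $\Rsign$ push this quantity to opposite sides of the cutoff. The fix is mechanical --- rederive player~$2$'s strategy directly from the four stated inequalities rather than by analogy with the main text --- but as written the sketch proves the proposition with the two regimes' informativeness swapped.
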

\begin{proof}
The assumption $l_0<l_{\eta}$ ensures that player $1$ follows own signal. Then player $2$'s public log likelihood ratios are $l_2(\lact)=l_0-l_{Qq\eta}$ and $l_2(\ract)=l_0+l_{Qq\eta}$. 

$\cost=0$. 
Assume $l_0+l_{Qq\eta}-l_q<0$ and $l_0-l_{Qq\eta}+l_{\eta}<0$, so $\act_2(\ract,\lsign)=\act_2(\lact,\rosign)=\lact$ and by implication, $\act_2(\lact,\rsign)=\act_2(\ract,\losign)=\ract$. 
Player $3$'s log likelihood ratios before seeing $s_3$ are 
\begin{align*}
&l_3(\lact,\lact) =l_0-l_{Qq\eta}-l_{\neg qQ}, \qquad l_3(\lact,\ract) =l_0-l_{Qq\eta}+l_{Qq}, 
\\&\notag l_3(\ract,\lact) =l_0+l_{Qq\eta}-l_{Qq}, \qquad l_3(\ract,\ract) =l_0+l_{Qq\eta}+l_{\neg qQ}. 
\end{align*}
If $l_0-l_{Qq\eta}-l_{\neg qQ}+l_{Q}<0$, then $\act_3$ is uninformative after $\act_2=\act_1$,  i.e.\ a herd starts. After $\act_2\neq \act_1$, player $3$'s action always responds to signals. 

$\cost>0$. 
If $l_0-l_{Qq\eta}+l_q-l_{\cost}(1)<0$ and $l_0+l_{Qq\eta}-l_{Q}+l_{\cost}(1)<0$ (which is implied by $l_0+l_{Qq\eta}+l_{\neg Q}-l_{Q}+l_{\cost}(1)<0$), then $\act_2(\lact,\rsign)=\act_2(\ract,\Lsign)=\lact$ and by implication, $\act_2(\ract,\lsign)=\act_2(\lact,\Rsign)=\ract$. 
Player $3$'s log likelihood ratios before observing $s_3$ are then 
\begin{align*}
&l_3(\lact,\lact) =l_0-l_{Qq\eta}-l_{\neg Q}, \qquad l_3(\lact,\ract) =l_0-l_{Qq\eta}+l_{Q}, 
\\&\notag l_3(\ract,\lact) =l_0+l_{Qq\eta}-l_{Q}, \qquad l_3(\ract,\ract) =l_0+l_{Qq\eta}+l_{\neg Q}. 
\end{align*}
If $l_0+l_{Qq\eta}+l_{\neg Q}-l_{Q}+l_{\cost}(1)<0$, then a strong signal switches the sign of $l_3(\ract,\ract)$, so $\act_3$ is informative after $\act_2=\act_1=\ract$ and by implication after any history. 
\end{proof}
The next example exhibits parameter values satisfying the assumptions of Proposition~\ref{prop:conform}.
\begin{example} 
Let $l_0=0$, $p_S=\frac{4}{5}$, $p_{s}=p_{\sigma}=\frac{1}{10}$, $\frac{Q}{p_S}\approx 0.984$, $\frac{q}{p_s}\approx 0.93$, $\frac{\eta}{p_{\sigma}}\approx 0.5$ and $\cost\approx 1.9\cdot 10^{-6}$, or alternatively $p_0=0.51$, $p_S=\frac{4}{5}$, $p_{s}=p_{\sigma}=\frac{1}{10}$, $\frac{Q}{p_S}\approx 0.987$, $\frac{q}{p_s}\approx 0.935$, $\frac{\eta}{p_{\sigma}}\approx 0.5002$ and $\cost\approx 0.04$. 

The probability of $\act_2=\act_1$ is $(Q+q+\eta)(p_{\sigma}+q+Q)+(1-Q-q-\eta)(1-q-Q)\approx 0.92$ under $\cost=0$, but 
$(Q+q+\eta)(p_{\sigma}+p_s+Q)+(1-Q-q-\eta)(1-Q)\approx 0.94$ under $\cost>0$. 
Thus in both examples, the probability that the action of player $3$ is informative rises from about $0.08$ to $1$ when the desire to conform is introduced. 
\end{example}


\bibliographystyle{ecta}
\bibliography{teooriaPaberid} 
\end{document}